\newtheorem{proposition}{Proposition}
\theoremstyle{definition}
\newtheorem{assumption}{Assumption}
\newcommand{\ind}{1\hspace{-2.1mm}{1}} 
\newcommand{\I}{\mathtt{i}}
\newcommand{\D}{\mathrm{d}}
\newcommand{\E}{\mathrm{e}}
\begin{document}

\title{Convergence of Heston to SVI}
\author{Jim Gatheral}
\address{Bank of America Merrill Lynch and Baruch College, CUNY}
\author{Antoine Jacquier}
\address{Department of Mathematics, Imperial College London and Zeliade Systems, Paris}
\email{ajacquie@imperial.ac.uk.}
\thanks{The authors would like to thank Aleksandar Mijatovi\'c for useful discussions.}
\date{}

\maketitle

\begin{abstract}
In this short note, we  prove by an appropriate change of variables that the SVI implied volatility parameterization presented in~\cite{Gatheral} and the large-time
asymptotic of the Heston implied volatility derived in~\cite{FJM} agree
algebraically, thus confirming a conjecture from~\cite{Gatheral}
as well as providing a simpler expression for the asymptotic implied volatility in the Heston model. We show how this result can help in interpreting SVI parameters.
\end{abstract}

\section{Introduction}

The {\em stochastic volatility inspired} or {\em SVI} parameterization of the implied volatility surface was originally devised at Merrill Lynch in 1999.  This parameterization has two key properties that have led to its subsequent popularity with practitioners: 
\begin{itemize}
\item{For a fixed time to expiry $T$, the implied Black-Scholes variance $\sigma_{BS}^2(k,T)$ is linear in the log-strike $k$ as $|k| \to \infty$ consistent with Roger Lee's moment formula~\cite{Lee}.}
\item{It is relatively easy to fit listed option prices whilst ensuring no calendar spread arbitrage.\footnote{It is seemingly impossible to eliminate the possibility of butterfly arbitrage but this is rarely a problem in practice.}}
\end{itemize}

The result we prove in this note shows that SVI is an exact solution for the implied variance in the Heston model in the limit $T \to \infty$ thus  providing a direct interpretation of the SVI parameters in terms of the parameters of the Heston model.

In Section~\ref{SectionNotation}, we present our notation.  In Section~\ref{sec:saddlepoint}, we motivate the conjecture which we prove in Section~\ref{sec:proof}. 
We conclude in Section~\ref{sec:interpretation} by showing how our result can help us interpret SVI parameters resulting from an SVI fit 
to an empirically observed volatility smile.

\section{Notations}\label{SectionNotation}
From~\cite{Gatheral}, recall that the SVI parameterization for the implied variance reads
\begin{equation}\label{eq:SVI}
\sigma^2_{SVI}\left(x\right)=\frac{\omega_1}{2}\,\left(1+\omega_2\rho x+\sqrt{\left(\omega_2
x+\rho\right)^2+1-\rho^2}\right),\quad\text{for
all } x\in\mathbb{R},
\end{equation}
where $x$ represents the time-scaled log-moneyness, and  consider the Heston model where the stock price process $\left(S_t\right)_{t\geq
0}$ satisfies the following stochastic differential equation:
\begin{align*}
\D S_t & =\sqrt{v_t}S_t \D W_t,\ S_0\in\mathbb{R^*_+}\\
\D v_t & = \kappa\left(\theta-v_t\right)\D t+\sigma\sqrt{v_t}\D Z_t,\ v_0\in\mathbb{R}^*_+\\
\D\langle W,Z\rangle_t & = \rho\,\D t,
\end{align*}
with $\rho\in\left[-1,1\right]$, $\kappa$, $\theta$, $\sigma$ and $v_0$
are strictly positive real numbers satisfying $2\kappa\theta\geq\sigma^2$
(this is the Feller condition ensuring that the process $\left(v_t\right)_{t\geq
0}$ never reaches zero almost surely). We further make the following assumption
as in~\cite{FJM}, under which the Heston asymptotic implied volatility is derived.
\begin{assumption}\label{ass:Coef}
$\kappa-\rho\sigma>0$. 
\end{assumption}
Note that this assumption is usually assumed in the literature, either explicitly or implicitly when assuming a negative correlation $\rho<0$ between the spot 
and the volatility as observed in equity markets. When this condition is not satisfied, the stock price process is still a true martingale, 
but moments greater than one will cease to exist after a certain amount of time, as pointed out in~\cite{Zeliade}, 
which refers to this special case as the \textit{large} correlation regime.
Let us now consider the following choice of SVI parameters 
in terms of the Heston parameters,
\begin{equation}\label{eq:ChangeVar}
\omega_1 :=\frac{4\kappa\theta}{\sigma^2\left(1-\rho^2\right)}\left(\sqrt{\left(2\kappa-\rho\sigma\right)^2+\sigma^2\left(1-\rho^2\right)}-\left(2\kappa-\rho\sigma\right)\right),\quad\text{and}\quad\omega_2 :=\frac{\sigma}{\kappa\theta}.
\end{equation}

Now we know from~\cite{FJM} that the implied
variance in the Heston model in the large time limit $T \to \infty$ takes the following form:
\begin{equation}\label{eq:AsymptH}
\sigma^2_{\infty}\left(x\right)=2\left(2V^*(x)-x+2\left(\ind_{x\in\left(-\theta/2,\bar{\theta}/2\right)}-\ind_{x\in\mathbb{R}\setminus\left(-\theta/2,\bar{\theta}/2\right)}\right)\sqrt{V^*(x)^2-xV^*(x)}\right),\quad\text{for
all } x\in\mathbb{R},
\end{equation}
where $\bar{\theta}:=\kappa\theta/\left(\kappa-\rho\sigma\right)$, and the
function $V^*:\mathbb{R}\to\mathbb{R}_+$ is defined by
\begin{equation}\label{DefOfVStar}
V^*(x):=p^*\left(x\right)x-V\left(p^*\left(x\right)\right),\quad\text{for all }x\in\mathbb{R},
\end{equation}
where
\begin{align*}
V(p) & := \frac{\kappa\theta}{\sigma^2}\Big(\kappa-\rho\sigma
p-d(p)\Big),\quad\text{for all }p\in\left(p_-,p_+\right),\\
d(p) & := \sqrt{\left(\kappa-\rho\sigma p\right)^2 + \sigma^2p\left(1-p^2\right)},\quad\text{for all }p\in\left(p_-,p_+\right),\\
p^*\left(x\right) & := \frac{\sigma-2\kappa\rho+\left(\kappa\theta\rho+x\sigma\right)\eta\left(x^2\sigma^2+2x\kappa\theta\rho\sigma+\kappa^2\theta^2\right)^{-1/2}}{2\sigma\bar{\rho}^2},\quad\text{for all }x\in\mathbb{R},\\
\eta & := \sqrt{4\kappa^2+\sigma^2-4\kappa\rho\sigma},\quad
p_{\pm}:=\left(-2 \kappa \rho+\sigma \pm \sqrt{\sigma^2+4 \kappa^2-4 \kappa \rho\sigma}\right)/\left(2\sigma\bar{\rho}^2\right),
\quad\text{and}\quad \bar{\rho}:=\sqrt{1-\rho^2}.
\end{align*}
Note that in this asymptotic Heston form for the implied volatility, $x$
corresponds to a time-scaled log-moneyness, {\em i.e.} the implied volatility
corresponds to call/put options with strike $S_0\exp\left(xT\right)$, where
$T \geq 0$ represents the maturity of the option.

\section{ The saddle-point condition}\label{sec:saddlepoint}

In this section, we give a non-rigorous motivation for the conjecture in \cite{Gatheral} that the $T \to \infty$ limit of the Heston volatility smile should be SVI.

Consider equation (5.7) on page 60 of \cite{Gatheral:TVS} which relates the implied volatility $\sigma_{BS}(k,T)$ at log-strike $k$ and expiration $T$ to the characteristic function $\phi_T(\cdot)$ of the log-stock price.  We rewrite this equation in the form
\begin{equation}
\int_{-\infty}^\infty\,\frac{\D u}{u^2+\frac{1}{4}}\,\E^{-\I \,u \,k}\phi_T\,
\left(u-\I /2\right)=
\int_{-\infty}^\infty\,\frac{\D u}{u^2+\frac{1}{4}}\,\E^{-\I \,u \,k}\,\E^{-\frac{1}{2}\,\left(u^2+\frac{1}{4}\right)\,\sigma_{BS}^2(k,T)\,T}.
\label{eq:longdates}
\end{equation}
In the limit $T \to \infty$, the Heston characteristic function has the form
\[
\phi_T(u-\I/2) \sim \E^{-\psi(u)\,T}.
\]
Then, as pointed out on page 186 of \cite{Lewis}, we may apply the saddle-point method to both sides in equation (\ref{eq:longdates}) to obtain 
\begin{equation}
\E^{-\I\,k \,\tilde u}\,\frac{\E^{-\psi(\tilde u)\,T}}{\tilde u^2+\frac{1}{4}}\,\sqrt{\frac{2\,\pi}{\psi''(\tilde u)\,T}}
\sim 4\,\exp\left\{-\frac{v\,T}{8}-\frac{k^2}{2\,v\,T}\right\}\,\sqrt{\frac{2\,\pi}{v\,T}},
\label{eq:saddlepoint}
\end{equation}
where $v$ is short-form notation for $\sigma_{BS}^2(k,T)\,T$ and $\tilde u$ is such that
\[
\psi'(\tilde u)=-\I\,\frac{k }{T},
\]
so that $\tilde u$ (which is in general a function of $k$) is a saddle-point, which in the Heston model at least, may be computed explicitly as in Lemma 5.3 of \cite{FJM}.

Defining $k:=x\,T$ and equating the arguments of the exponentials in equation (\ref{eq:saddlepoint}), the dependence on $T$ cancels and we obtain
\begin{equation}
\frac{v(x)}{8}+\frac{x^2}{2\,v(x)}=\psi(\tilde u(x))+\I\,x\,\tilde u(x),
\label{eq:equateexponents}
\end{equation}
where we have reinstated explicit dependence on $x$ for emphasis.

With the help of {\em e.g.} Mathematica, one can verify that in the $T \to \infty$ limit of the Heston model and with the choice (\ref{eq:ChangeVar}) of SVI parameters, 
expression (\ref{eq:SVI}) exactly solves the saddle-point condition (\ref{eq:equateexponents}):
$$
\frac{\sigma_{SVI}^2(x)}{8}+\frac{x^2}{2\,\sigma_{SVI}^2(x)}=\psi(\tilde u(x))+\I\,x\,\tilde u(x).
$$
We are thus led to conjecture that $\sigma^2_{SVI}\left(x\right)=\sigma^2_{\infty}\left(x\right)$ so that the $T \to \infty$ limit of implied variance in the Heston model is SVI.

\section{Main result and proof}\label{sec:proof}
We now state and prove the main result of this note,
\begin{proposition}\label{prop:Result}
Under Assumption~\ref{ass:Coef} and the choice of SVI parameters~\eqref{eq:ChangeVar},
 $\sigma^2_{SVI}\left(x\right)=\sigma^2_{\infty}\left(x\right)$ for all
$x\in\mathbb{R}$.
\end{proposition}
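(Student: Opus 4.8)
The plan is to reduce the identity to an algebraic relation satisfied by both sides, and then to match branches. I would first expose two structures shared by the two expressions. Since $(2\kappa-\rho\sigma)^2+\sigma^2\bar{\rho}^2=4\kappa^2-4\kappa\rho\sigma+\sigma^2=\eta^2$, the parameter $\omega_1$ in~\eqref{eq:ChangeVar} collapses to $\omega_1=\frac{4\kappa\theta}{\sigma^2\bar{\rho}^2}\left(\eta-2\kappa+\rho\sigma\right)$. Writing $D(x):=\sqrt{x^2\sigma^2+2x\kappa\theta\rho\sigma+\kappa^2\theta^2}=\sqrt{(\sigma x+\kappa\theta\rho)^2+\kappa^2\theta^2\bar{\rho}^2}$ and using $\omega_2=\sigma/(\kappa\theta)$, one checks that $(\omega_2 x+\rho)^2+\bar{\rho}^2=D(x)^2/(\kappa\theta)^2$, so the same radical $D(x)$ governs both the SVI square root and $p^*(x)$. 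This turns~\eqref{eq:SVI} into $\sigma^2_{SVI}(x)=\frac{\omega_1}{2\kappa\theta}\left(\kappa\theta+\sigma\rho x+D(x)\right)$, while $p_\pm=\left(-2\kappa\rho+\sigma\pm\eta\right)/(2\sigma\bar{\rho}^2)$ exhibits $p^*(x)$ as an affine function of the single quantity $s(x):=(\sigma x+\kappa\theta\rho)/D(x)\in(-1,1)$, namely $p^*(x)=\tfrac12(p_++p_-)+\tfrac12(p_+-p_-)\,s(x)$.

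This suggests the change of variables announced in the abstract: parameterise everything through $p:=p^*(x)$, equivalently through $s\in(-1,1)$, noting that $p$ sweeps $(p_-,p_+)$ and satisfies $x=V'(p)$. The key simplification, which I would carry out next, is that $d(\cdot)^2$ is a quadratic whose two roots are precisely $p_-$ and $p_+$, so $d(p)^2=\sigma^2\bar{\rho}^2(p_+-p)(p-p_-)$; combined with $p_+-p^*=\tfrac12(p_+-p_-)(1-s)$, $p^*-p_-=\tfrac12(p_+-p_-)(1+s)$ and $1-s(x)^2=\kappa^2\theta^2\bar{\rho}^2/D(x)^2$, this collapses the nested radical to $d(p^*(x))=\kappa\theta\eta/(2D(x))$. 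From here I obtain closed forms for $V(p^*(x))$ and hence for $V^*(x)=p^*(x)\,x-V(p^*(x))$ as explicit functions of $x$.

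With $V^*(x)$ and $\sigma^2_{SVI}(x)$ both in closed form, the identity to verify is purely algebraic:
\begin{equation*}
\left(\sigma^2_{SVI}(x)+2x\right)^2=8\,V^*(x)\,\sigma^2_{SVI}(x).
\end{equation*}
Indeed, rearranging~\eqref{eq:AsymptH} shows that $\sigma^2_\infty(x)$ is one of the two roots of $w^2+(4x-8V^*(x))\,w+4x^2=0$, namely $4V^*(x)-2x\pm4\sqrt{V^*(x)^2-xV^*(x)}$; establishing the displayed identity therefore shows that $\sigma^2_{SVI}(x)$ solves the same quadratic, so that $\sigma^2_{SVI}(x)=\sigma^2_\infty(x)$ up to the choice of sign. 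In the variable $s$ both sides of the identity are rational, so this step should be tractable by hand.

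It remains to match the branch. I would verify $V'(0)=-\theta/2$ and $V'(1)=\bar{\theta}/2$, so that the endpoints of $(-\theta/2,\bar{\theta}/2)$ correspond exactly to $p^*=0$ and $p^*=1$; since $V(0)=V(1)=0$ one has $V^*(x)=p^*(x)\,x$ there, whence $V^*(x)^2-xV^*(x)=0$ exactly at $x=-\theta/2$ and $x=\bar{\theta}/2$. Thus the two roots of the quadratic coincide precisely at the endpoints, and the selected root can only switch there. Having already shown that $\left(\sigma^2_{SVI}(x)-(4V^*(x)-2x)\right)^2=16\left(V^*(x)^2-xV^*(x)\right)$, I would conclude by reading off the sign of $\sigma^2_{SVI}(x)-(4V^*(x)-2x)$ from the explicit formula, checking it is positive for $p^*\in(0,1)$, i.e. $x\in(-\theta/2,\bar{\theta}/2)$, and negative otherwise, which reproduces the indicator prefactor in~\eqref{eq:AsymptH}. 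The main obstacle throughout is the nested-radical bookkeeping in computing $d(p^*(x))$ and $V^*(x)$; the change of variables $p=p^*(x)$ is precisely what tames it, and is presumably what was delegated to Mathematica in Section~\ref{sec:saddlepoint}.
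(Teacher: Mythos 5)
Your proposal is correct, and its skeleton coincides with the paper's: both reduce everything to the single radical $\Delta(x)=D(x)$, both derive the same closed form $\sigma^2_{SVI}(x)=\frac{\omega_1}{2\kappa\theta}\left(\kappa\theta+\rho\sigma x+\Delta(x)\right)$, and both end with the same sign analysis (your quantity $\sigma^2_{SVI}(x)-(4V^*(x)-2x)$ equals $\frac{2}{\sigma^2\bar{\rho}^2}\left(\eta\left(\kappa\theta+x\rho\sigma\right)-\left(2\kappa-\rho\sigma\right)\Delta(x)\right)$, which is exactly the expression whose sign the paper studies via the quadratic $\psi(x)$ with roots $-\theta/2$ and $\bar{\theta}/2$). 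The genuine differences are two. First, your route to $V^*$ is cleaner: by writing $p^*$ as an affine function of $s(x)$ and factoring $d(\cdot)^2$ through its roots $p_\pm$, you get $d(p^*(x))=\kappa\theta\eta/(2\Delta(x))$ and hence $V^*$ without the nested-radical expansion that the paper simply asserts; this is a real expository gain. Second, where the paper computes $\Phi(x)=V^*(x)^2-xV^*(x)$ and recognizes it as the perfect square $\frac{1}{4\sigma^4\bar{\rho}^4}\left(\eta\left(\kappa\theta+x\rho\sigma\right)-\left(2\kappa-\rho\sigma\right)\Delta(x)\right)^2$, you instead verify the identity $\left(\sigma^2_{SVI}(x)+2x\right)^2=8V^*(x)\,\sigma^2_{SVI}(x)$ (which I checked does hold: after clearing denominators it reduces to $(\eta^2-M^2)(L^2-\Delta^2)+x^2\sigma^4\bar{\rho}^4=0$ with $M=2\kappa-\rho\sigma$, $L=\kappa\theta+\rho\sigma x$, which is immediate from the two factorisations in \eqref{eq:Fact}). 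This buys a conceptually pleasant link to the saddle-point condition \eqref{eq:equateexponents} and avoids extracting a square root, at the cost of having to do the branch selection afterwards anyway. One small caution: your claim that $V^*(x)^2-xV^*(x)$ vanishes \emph{only} at $x=-\theta/2$ and $x=\bar{\theta}/2$ needs more than the evaluations $V(0)=V(1)=0$ (it follows from strict convexity of $V$, or from the perfect-square form), but this does not matter for your proof because the direct sign check of $\sigma^2_{SVI}(x)-(4V^*(x)-2x)$ on each region, which you also propose, determines the branch pointwise without any continuity argument.
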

\begin{proof}
Let us now introduce the following  notations: 
$\Delta(x):= \sqrt{\sigma^2 x^2+2\kappa\theta\rho\sigma x+\kappa^2\theta^2}$, where $\eta$ and $\bar{\rho}$ are defined in Section~\ref{SectionNotation}.
Under the change of variables~\eqref{eq:ChangeVar}, the SVI implied variance
takes the form
\begin{equation}\label{eq:SVI2}
\sigma^2_{SVI}\left(x\right)=\frac{2}{\sigma^2\bar{\rho}^2}\Big(\eta-\left(2\kappa-\rho\sigma\right)\Big)\Big(\kappa\theta+\rho\sigma
x+\Delta\left(x\right)\Big),\quad\text{for
all } x\in\mathbb{R}.
\end{equation}
We now move on to simplify the expression for $\sigma^2_{\infty}$ as written
in~\eqref{eq:AsymptH}. We first start by the expression for $V^*(x)$ appearing in~\eqref{eq:AsymptH}.
We have
$$V^*(x) = \frac{A\left(x\right)\Delta\left(x\right)+B(x)\eta}{2\sigma^2\bar{\rho}^2\Delta(x)},$$
with 
$$
A(x):= x\sigma^2-2x\kappa\rho\sigma-2\kappa^2\theta+\kappa\theta\rho\sigma,\quad\text{and}\quad
B(x):= 2x\sigma\kappa\theta\rho+x^2\sigma^2+\kappa^2\theta^2\rho^2+\kappa^2\theta^2\bar{\rho}^2.
$$
Note that $B(x)=\Delta^2\left(x\right)$, so that $V^*\left(x\right)=\left(A(x)+\Delta(x)\eta\right)/\left(2\sigma^2\bar{\rho}^2\right)$.
We further have
\begin{equation}\label{eq:Part1}
2V^*\left(x\right)-x=\frac{A\left(x\right)+\Delta\left(x\right)\eta-x\sigma^2\bar{\rho}^2}{\sigma^2\bar{\rho}^2}=\frac{\Delta\left(x\right)\eta-\left(2\kappa-\rho\sigma\right)\left(\kappa\theta+x\rho\sigma\right)}{\sigma^2\bar{\rho}^2},
\end{equation}
where we use the factorisation $A\left(x\right)-x\sigma^2\bar{\rho}^2=-\left(2\kappa-\rho\sigma\right)\left(\kappa\theta+x\rho\sigma\right)$.

Now, back to~\eqref{eq:AsymptH}, where we denote $\Phi\left(x\right):=V^*(x)^2-xV^*(x)$.
We have
$$\Phi\left(x\right)=\left(\frac{\Delta(x)\eta}{2\sigma^2\bar{\rho}^2}\right)^2+\alpha\left(x\right)\Delta\left(x\right)+\beta\left(x\right),$$
where
$$
\alpha\left(x\right) := -\frac{\eta\left(2\kappa-\rho\sigma\right)\left(\kappa\theta+x\rho\sigma\right)}{2\sigma^4\bar{\rho}^4},\quad\text{and}\quad
\beta\left(x\right) := \frac{1}{4\sigma^4\bar{\rho}^4}\left\{\left(2\kappa-\rho\sigma\right)^2\left(\kappa\theta+x\rho\sigma\right)^2-x^2\sigma^4\bar{\rho}^4\right\}.
$$
We now use the following factorisations: 
\begin{equation}\label{eq:Fact}
\Delta^2\left(x\right)=\left(\kappa\theta+x\rho\sigma\right)^2+x^2\sigma^2\bar{\rho}^2,\quad\text{and}\quad\eta^2=\left(2\kappa-\rho\sigma\right)^2+\sigma^2\bar{\rho}^2,
\end{equation}
so that we can write $\beta\left(x\right)=\left(4\sigma^4\bar{\rho}^4\right)^{-1}\left(\left(2\kappa-\rho\sigma\right)^2\Delta^2\left(x\right)-x^2\sigma^2\bar{\rho}^2\eta^2\right)$
and hence
\begin{align}\label{eq:Part2}
\Phi\left(x\right) 
 & = \frac{1}{4\sigma^4\bar{\rho}^4}\left\{\left[\left(2\kappa-\rho\sigma\right)^2+\sigma^2\bar{\rho}^2\right]\Delta^2\left(x\right)+a\left(x\right)\Delta\left(x\right)+\left(\eta^2-\sigma^2\bar{\rho}^2\right)\left(\Delta^2\left(x\right)-x^2\sigma^2\bar{\rho}^2\right)-x^2\sigma^4\bar{\rho}^4\right\}\nonumber\\
 & = \frac{1}{4\sigma^4\bar{\rho}^4}\left\{\left(2\kappa-\rho\sigma\right)^2\Delta^2\left(x\right)+a\left(x\right)\Delta\left(x\right)+\eta^2\left(\kappa\theta+x\rho\sigma\right)^2\right\}\nonumber\\
 & = \frac{1}{4\sigma^4\bar{\rho}^4}\Big\{\eta\left(\kappa\theta+x\rho\sigma\right)-\left(2\kappa-\rho\sigma\right)\Delta\left(x\right)\Big\}^2,
\end{align}
where, for convenience, we denote $a\left(x\right):=4\sigma^4\bar{\rho}^4\alpha\left(x\right)$.
To complete the proof, we need to take the square root of $\Phi\left(x\right)$,
i.e. we need to study the sign of the expression under the square in~\eqref{eq:Part2}.
Using again \eqref{eq:Fact}, we have
\begin{align*}
&\eta\left(\kappa\theta+x\rho\sigma\right)-\left(2\kappa-\rho\sigma\right)\Delta\left(x\right)\\
 & = \left(\kappa\theta+x\rho\sigma\right)\sqrt{\left(2\kappa-\rho\sigma\right)^2+\sigma^2\bar{\rho}^2}-\left(2\kappa-\rho\sigma\right)\sqrt{\left(\kappa\theta+x\rho\sigma\right)^2+x^2\sigma^2\bar{\rho}^2}\\
 & = \sqrt{\gamma\left(x\right)+\sigma^2\bar{\rho}^2\left(\kappa\theta+x\rho\sigma\right)^2}-\sqrt{\gamma\left(x\right)+x^2\sigma^2\bar{\rho}^2\left(2\kappa-\rho\sigma\right)^2},
 \end{align*}
where $\gamma\left(x\right):=\left(2\kappa-\rho\sigma\right)^2\left(\kappa\theta+x\rho\sigma\right)^2$.
Now, because $\gamma\left(x\right)\geq 0$ for all $x\in\mathbb{R}$, then
the sign of this whole expression is simply given by the sign of the difference
$\psi\left(x\right):=\sigma^2\bar{\rho}^2\left(\kappa\theta+x\rho\sigma\right)^2-x^2\sigma^2\bar{\rho}^2\left(2\kappa-\rho\sigma\right)^2$.
Note further that we actually have $\psi\left(x\right)=\kappa\sigma^2\bar{\rho}^2\left(2x+\theta\right)\left(2x\rho\sigma+\kappa\theta-2\kappa
x\right)$, that this polynomial has exactly two real roots $-\theta/2$ and $\bar{\theta}/2$, and that its second-order coefficient reads $-4\kappa\sigma^2\bar{\rho}^2\left(\kappa-\rho\sigma\right)<0$
under Assumption~\ref{ass:Coef}.
So, plugging~\eqref{eq:Part1} and~\eqref{eq:Part2} into~\eqref{eq:AsymptH}, we exactly obtain \eqref{eq:SVI2} and the proposition
follows.
\end{proof}

\section{Interpretation of SVI parameters}\label{sec:interpretation}

In this section, we use our result in Proposition~\ref{prop:Result} to help interpret the SVI parameters.  From \cite{Gatheral}, the standard SVI parameterization in terms of the log-strike $k$ reads
\begin{equation}
\sigma_{SVI}^2(k)=a+b\,\left\{\tilde \rho\,(k-m)+\sqrt{(k-m)^2+\tilde \sigma^2}\right\}.
\label{eq:SVIclassic}
\end{equation}
Equating (\ref{eq:SVIclassic}) with (\ref{eq:SVI}) and with the parameter choice (\ref{eq:ChangeVar}):
\[
\omega_1 :=\frac{4\kappa\theta}{\sigma^2\left(1-\rho^2\right)}\left(\sqrt{\left(2\kappa-\rho\sigma\right)^2+\sigma^2\left(1-\rho^2\right)}-\left(2\kappa-\rho\sigma\right)\right),
\quad\text{and}\quad\omega_2 :=\frac{\sigma}{\kappa\theta},
\]
we find the following correspondence between SVI parameters and Heston parameters;
\begin{eqnarray}
a&=&\frac{\omega_1}{2}\,(1-\rho^2),\nonumber\\
b&=&\frac{\omega_1\,\omega_2}{2\,T},\nonumber\\
\tilde \rho&=&\rho,\nonumber\\
m&=&-\frac{\rho\,T}{\omega_2},\nonumber\\
\tilde \sigma&=&\frac{\sqrt{1-\rho^2}\,T}{\omega_2}.
\label{eq:SVImap}
\end{eqnarray}
For concreteness, imagine that we are given an SVI fit to the implied volatility smile generated from the Heston model with $T$ very large so that we have the SVI parameters $a$, $b$, $\tilde \rho$, $m$ and $\tilde \sigma$.  Our first observation is that the SVI parameter $\tilde \rho$ is exactly the correlation $\rho$ between changes in instantaneous variance $v$ and changes in the underlying $S$ in the Heston process.  That is, we can read off correlation directly from the orientation of the volatility smile.  In particular, the smile is symmetric when $\rho=0$.

The parameter $b$ gives the angle between the asymptotes of the implied variance smile.  We see from (\ref{eq:SVIclassic}) and (\ref{eq:SVImap}) that the angle between the asymptotes of the total variance smile $\sigma_{SVI}^2(k)\,T$ is constant for large $T$ but that the overall level increases with $T$.

From equation (\ref{eq:SVI}), $\omega_1$ is the at-the-money implied variance $\sigma_{SVI}^2(0,T)$.  
From (\ref{eq:ChangeVar}), in the limit $\sigma \ll \kappa$ , we have
\[
\omega_1=\theta\,\left\{1+\frac{\rho\,\sigma}{\kappa}+O\left(\left(\frac{\sigma}{\kappa}\right)^2\right)
\right\},
\]
so that the at-the-money volatility is given directly by $\theta$ when the volatility of volatility is small.  In the limit $\sigma \gg \kappa$ , we have
\[
\omega_1=\frac{4\, \kappa\,\theta}{\sigma\,(1-\rho)}\,\left\{1-2\,\frac{\kappa}{\sigma}+O\left(\left(\frac{\kappa}{\sigma}\right)^2\right)\right\},
\]
showing that at-the-money volatility decreases as the volatility-of-volatility increases and as the volatility becomes less correlated with the underlying.

Finally, the minimum of the variance smile is attained at $x=-2\,\rho/\omega_2$, providing a simple interpretation of the parameter $\omega_2$. 
In particular, if $\rho=0$, the minimum is exactly the at-the-money point.  The minimum shifts to the upside $x>0$ if $\rho <0$ and to the downside $x<0$ if $\rho > 0$.

\end{document}